\newtheorem{theorem}{Theorem}[section]
\newtheorem{lemma}[theorem]{Lemma}
\newtheorem{proposition}[theorem]{Proposition}
\theoremstyle{definition}
\newtheorem{example}[theorem]{Example}
\theoremstyle{remark}
\newtheorem{remark}[theorem]{Remark}
\numberwithin{equation}{section}
\begin{document}

\title[Convexity of the  Csisz\'{a}r $f$-divergence functional]{How type of Convexity of the Core function affects the Csisz\'{a}r $f$-divergence functional}
\author[M. Kian]{Mohsen Kian}

\address{Mohsen Kian: Department of Mathematics, University of Bojnord, P. O. Box 1339, Bojnord 94531, Iran}
\email{kian@ub.ac.ir }

\subjclass[2010]{Primary 26B25 .  94A17; Secondary 15A45 . 26D15}

\keywords{ Csisz\'{a}r $f$-divergence, convexity and joint convexity, Matrix Jensen inequality,  spectral decomposition}

\begin{abstract}
We investigate how the type of Convexity of the Core function affects the Csisz\'{a}r $f$-divergence functional. A general treatment for the type of convexity has been considered and the associated perspective functions have been studied. In particular, it has been shown that when the core function is \rm{MN}-convex, then the   associated perspective function is jointly \rm{MN}-convex if the two scalar mean \rm{M} and \rm{N} are the same.    In the case where $\mathrm{M}\neq\mathrm{N}$, we study the type of convexity of the perspective function.
As an application, we prove that the   \textit{Hellinger distance}   is jointly \rm{GG}-convex.
As further applications,  the matrix Jensen inequality has been developed for the perspective functions under different kinds of convexity.
\end{abstract}
\maketitle

\section{introduction}

In the probability theory, the notion of Csisz\'{a}r $f$-divergence is well-known in relation with measures between probability distributions. Those kinds of measures have many applications in many directions, like economics, genetics, signal processing and so on. In fact,  Csisz\'{a}r \cite{CS,CS2} introduced   $f$-divergence functional of a function $f:[0,\infty)\to\mathbb{R}$ by
\begin{align*}
  I_f(\mathbf{p},\mathbf{q}):=\sum_{j=1}^{n}q_j f\left(\frac{p_j}{q_j}\right)
\end{align*}
 for  $n$-tuples of positive real numbers $\mathbf{p}=(p_1,\ldots,p_n)$ and $\mathbf{q}=(q_1,\ldots,q_n)$. In above definition, the undefined
expressions are  interpreted as
\begin{eqnarray*}
 f(0)=\lim_{t\to 0^+}f(t), \qquad 0f\left(\frac{0}{0}\right)=0, \qquad
 0f\left(\frac{p}{0}\right)=\lim_{\epsilon\to 0^+}f\left(\frac{p}{\epsilon}\right)=p\lim_{t\to \infty}\frac{f(t)}{t}.
 \end{eqnarray*}

 A useful fact concerning the $f$-divergence functional was proved by Csisz\'{a}r and K\"{o}rner \cite{CK} as follows. In fact, they showed that the perspective function of a convex function is sub-additive. \par

\textbf{Theorem A.} \textit{If $f:[0,\infty)\to\mathbb{R}$ is convex, then $I_f(\mathbf{p},\mathbf{q})$ is jointly convex in $\mathbf{p}$ and $\mathbf{q}$ and
 \begin{eqnarray}\label{m1}
 g\left(\sum_{j=1}^n p_j,\sum_{j=1}^n q_j\right)=\sum_{j=1}^{n}q_jf\left(\frac{\sum_{j=1}^{n}p_j}{\sum_{j=1}^{n}q_j}\right)\leq I_f(\mathbf{p},\mathbf{q})
 \end{eqnarray}
 for all positive $n$-tuples $\mathbf{p}=(p_1,\ldots,p_n)$ and $\mathbf{q}=(q_1,\ldots,q_n)$, where the  perspective function $g$ associated to $f$ is defined   by
\begin{align*}
 g(x,y):=yf\left(\frac{x}{y}\right).
\end{align*}}

 When   $f$ varies through convex functions, the Csisz\'{a}r $f$-divergence produces different known measures. Among others, we mention the following notable measures:\\
-\textit{Kullback--Leibler distance} is defined by $KL(\mathbf{p},\mathbf{q}):=\sum_{j=1}^{n}p_j \log\left(\frac{p_j}{q_j}\right)$ and $KL=I_f$, when     $f(t)=t\ln t$ $(t>0)$.\\
-\textit{Total variation distance} is defined by $V(\mathbf{p},\mathbf{q}):=\sum_{j=1}^{n}\left|p_j-q_j\right|$ and $V=I_f$, when     $f(t)=|t-1|$ $(t\geq0)$.\\
-\textit{ Hellinger distance} is defined by $H^2(\mathbf{p},\mathbf{q}):=2\sum_{j=1}^{n}\left(\sqrt{p_j}-\sqrt{q_j}\right)^2$ and $H^2=I_f$, when     $f(t)=2(\sqrt{t}-1)^2$\, $(t\geq0)$.\\
-\textit{ $\chi^2$-distance} is defined by $D_{\chi^2}(\mathbf{p},\mathbf{q}):=\sum_{j=1}^{n}\frac{\left(p_j-q_j\right)^2}{q_j}$ and $D_{\chi^2}=I_f$, when     $f(t)= (t-1)^2$\, $(t\geq0)$.\\
-\textit{R\'{e}nyi’s divergences} are defined by $R_\alpha(\mathbf{p},\mathbf{q}):=\frac{1}{\alpha (\alpha -1)}\ln \rho_\alpha(\mathbf{p},\mathbf{q})$ for every $\alpha\in\mathbb{R}\backslash\{0,1\}$, where $\rho_\alpha(\mathbf{p},\mathbf{q})= \sum_{j=1}^{n}p_j^\alpha q_j^{1-\alpha}$ and $\rho_\alpha=I_f$, when     $f(t)=t^\alpha$\, $(t>0)$.\\

For more information about $f$-divergence functional and its properties, the reader is referred to \cite{Gil,H-M,Ki,Sa,Va} and references therein.

  For every two positive real numbers $x,y$ and every $\alpha\in[0,1]$,  the most  well-known scalar means read as follows:
 \begin{align*}
   \mathrm{\mathbf{A}}_\alpha(x,y)&=\alpha x+(1-\alpha)y \qquad\qquad\quad \mbox{Arithmetic mean}\\
 \mathrm{\mathbf{G}}_\alpha(x,y)&=x^\alpha y^{1-\alpha}\qquad\qquad\qquad\qquad \mbox{Geometric  mean}\\
\mathrm{\mathbf{H}}_\alpha(x,y)&= \left(\alpha x^{-1}+(1-\alpha)y^{-1}\right)^{-1}\quad \mbox{Harmonic  mean}.
 \end{align*}
 The Arithmetic-Geometric-Harmonic means inequality is well-known:
\begin{align}\label{agh}
\mathrm{\mathbf{H}}_\alpha(x,y)\leq  \mathrm{\mathbf{G}}_\alpha(x,y) \leq   \mathrm{\mathbf{A}}_\alpha(x,y),\qquad        (x,y\geq0,\ \alpha\in[0,1]).
\end{align}

\section{The effect of type of convexity of  core function on the   $f$-divergence functional}

Convex functions are known to be defined using the Arithmetic mean: A real function $f$ is convex  when
 $$f(\mathrm{\mathbf{A}}_\alpha(x,y))\leq \mathrm{\mathbf{A}}_\alpha(f(x),f(y))$$
for all $x,y$ in domain of $f$ and every $\alpha\in[0,1]$. However, when
 the Arithmetic means are replaced by other means in both sides of the above inequality, different type of convexities for functions can be derived.
In the next definition, we limit the domain and the range of our function to the positive half-line, while it will be  possible to consider this sets a more general subset of  real functions depending on occasions. If $\mathrm{\mathbf{M}}_\alpha$ and $\mathrm{\mathbf{N}}_\alpha$ are two $\alpha$-weighted scalar means, a positive real function $f$ on $(0,\infty)$ is said to be  \rm{MN}-convex, when
\begin{align}\label{PQ}
  f(\mathrm{\mathbf{M}}_\alpha(x,y))\leq \mathrm{\mathbf{N}}_\alpha(f(x),f(y))
\end{align}
holds for all $x,y\geq0$   and every $\alpha\in[0,1]$. Note that an \rm{AA}-convex function is simply called convex. Moreover, some of these  functions enjoy  well-known titles. \rm{HA}-convex functions are called Harmonically convex and \rm{AG}-convex functions are known as log-convex or multiplicative convex functions.

Some of basic facts concerning \rm{MN}-convex functions are given in the following lemma. The reader is referred to \cite{An-Va-Vu,Dr1,Dr3,Ni,No,Zh-Cu-Zh} to see the proofs and more information about these functions.
 \begin{lemma}\label{all-con}
Let $f$ be a positive real function on $(0,\infty)$.\\
 \rm{(i)}\ If $f$ is \rm{AG}-convex if and only if $\log f$ is convex;\\
\rm{(ii)}\   $f$ is \rm{AH}-convex  if and only if $1/f$ is concave;\\
\rm{(iii)}\  $f$ is \rm{GA}-convex (concave)   if and only if $fo\exp$  is convex (concave);\\
\rm{(iv)}\  If $h$ is convex (concave), then $f(t)=h(\ln t)$ is  \rm{GA}-convex (concave);\\
\rm{(v)}\ $f$ is \rm{GG}-convex if and only if  the function $h=\ln ofo\exp$  is convex;\\
\rm{(vi)}\ $f$ is \rm{GG}-convex if and only if $h=\ln o f$ is \rm{GA}-convex;\\
\rm{(vii)}\ $f$ is \rm{GH}-convex (concave)   if and only if $fo\exp$ is \rm{AH}-convex (concave);\\
\rm{(iix)}\ $f$ is \rm{HG}-convex  if and only if $h(t)=t\ln f(t)$ is convex;\\
\rm{(ix)} $f$ is \rm{HG}-convex if and only if  $\ln f$ is \rm{HA}-convex;\\
\rm{(x)}    $f$ is \rm{HH}-convex (concave)   if and only if  $h(t)=t/f(t)$  is  concave (convex).
\end{lemma}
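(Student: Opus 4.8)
These equivalences are standard (cf.\ the references cited above); here is a uniform way to establish them. The plan is to view each item as transport of structure, using that both $\mathbf{G}_\alpha$ and $\mathbf{H}_\alpha$ are the arithmetic mean conjugated by a monotone bijection of $(0,\infty)$: $\ln\mathbf{G}_\alpha(x,y)=\mathbf{A}_\alpha(\ln x,\ln y)$ and $\mathbf{H}_\alpha(x,y)^{-1}=\mathbf{A}_\alpha(x^{-1},y^{-1})$, where $t\mapsto\ln t$ is increasing and $t\mapsto t^{-1}$ is decreasing. Applying $\log$, or the reciprocal, to the defining inequality $f(\mathbf{M}_\alpha(x,y))\le\mathbf{N}_\alpha(f(x),f(y))$ converts the outer mean $\mathbf{N}$ into $\mathbf{A}$ (the reciprocal also reverses $\le$, hence swaps ``convex'' and ``concave''), while the substitution $x=e^u,\ y=e^v$ converts an inner $\mathbf{G}$ into $\mathbf{A}$. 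Items (i)--(vii) and (ix) then follow from one or two such moves: for (i), take $\log$ of $f(\mathbf{A}_\alpha(x,y))\le f(x)^\alpha f(y)^{1-\alpha}$; for (ii), take reciprocals of $f(\mathbf{A}_\alpha(x,y))\le\mathbf{H}_\alpha(f(x),f(y))$, obtaining concavity of $1/f$; for (iii) and (vii), substitute $x=e^u,\ y=e^v$; (iv) is the case $f=h\circ\ln$ of (iii), since then $f\circ\exp=h$; (vi) is the logarithm of the $\mathrm{GG}$-inequality, read as $\mathrm{GA}$-convexity of $\log f$, and (v) is (vi) followed by (iii) applied to $\log f$, since $(\log f)\circ\exp=\log\circ f\circ\exp$; and (ix) is the logarithm of the $\mathrm{HG}$-inequality, which says precisely that $\log f$ is $\mathrm{HA}$-convex.

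The one ingredient beyond a bare substitution --- needed for (iix) and (x) --- is the auxiliary fact
\[
g\ \text{is}\ \mathrm{HA}\text{-convex}\quad\Longleftrightarrow\quad t\mapsto t\,g(t)\ \text{is convex}
\]
together with its concave version. Note that $t\mapsto t^{-1}$ does conjugate the harmonic mean to the arithmetic one, but on the argument side, so that move produces only the characterization via $g(1/s)$, not the one in the statement. I would instead argue by re-weighting: convexity of $h(t)=t\,g(t)$ at $c:=\alpha a+(1-\alpha)b$ reads $c\,g(c)\le\alpha a\,g(a)+(1-\alpha)b\,g(b)$; dividing by $c$ and setting $\lambda:=\alpha a/c\in[0,1]$ gives $g(c)\le\lambda g(a)+(1-\lambda)g(b)$, and a one-line computation yields $c^{-1}=\lambda a^{-1}+(1-\lambda)b^{-1}$, i.e.\ $c=\mathbf{H}_\lambda(a,b)$, which is $\mathrm{HA}$-convexity of $g$. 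Running the identities backwards proves the converse, once one observes that $\alpha\mapsto\lambda$ is a bijection of $[0,1]$ (continuous, monotone, $0\mapsto0$, $1\mapsto1$). Alternatively this follows from convexity of $g(1/s)$, since then $t\,g(t)$ is the perspective function $s\mapsto s\,\psi(1/s)$ of $\psi(s)=g(1/s)$ and perspectives of convex functions are convex --- the scalar case behind Theorem~A.

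Granting this, (iix) follows by combining (ix) with the auxiliary fact applied to $g=\log f$: $f$ is $\mathrm{HG}$-convex if and only if $\log f$ is $\mathrm{HA}$-convex, if and only if $t\mapsto t\log f(t)$ is convex; equivalently one may run the re-weighting directly on $f(\mathbf{H}_\lambda(a,b))\le f(a)^\lambda f(b)^{1-\lambda}$. For (x), take reciprocals of the $\mathrm{HH}$-inequality $f(\mathbf{H}_\alpha(x,y))\le\mathbf{H}_\alpha(f(x),f(y))$; since the reciprocal of a harmonic mean is the arithmetic mean of the reciprocals, this says that $1/f$ is $\mathrm{HA}$-concave, and the concave version of the auxiliary fact applied to $g=1/f$ gives that $t\mapsto t/f(t)$ is concave. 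The parenthetical ``(concave)'' statements follow by exchanging ``convex'' and ``concave'' throughout.

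I expect the only genuine work to be bookkeeping: tracking, in each of the ten equivalences, which transformation is applied and whether it reverses the inequality, and --- for (iix) and (x) --- checking that the weight map $\alpha\mapsto\alpha a/\mathbf{A}_\alpha(a,b)$ is a bijection of $[0,1]$, so that ``$\forall\alpha$'' on one side is genuinely equivalent to ``$\forall\lambda$'' on the other. Domain issues are harmless: $\log$, $\exp$, $t\mapsto t^{-1}$ are bijections of the relevant intervals, positivity of $f$ makes $\log f$, $1/f$, $t\log f(t)$, $t/f(t)$ well defined, and the arithmetic-mean forms of all the inequalities are meaningful for real-valued (not necessarily positive) auxiliary functions, so the reductions through $\log f$ and $1/f$ need no extra hypothesis.
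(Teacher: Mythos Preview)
Your argument is correct. The transport-of-structure strategy handles (i)--(vii) and (ix) cleanly, and the re-weighting lemma ``$g$ is $\mathrm{HA}$-convex $\Longleftrightarrow$ $t\mapsto t\,g(t)$ is convex'' is verified exactly as you indicate: with $c=\mathbf{A}_\alpha(a,b)$ and $\lambda=\alpha a/c$, one has $1-\lambda=(1-\alpha)b/c$ and $\lambda a^{-1}+(1-\lambda)b^{-1}=c^{-1}$, so the two inequalities match; the map $\alpha\mapsto\lambda$ is indeed a bijection of $[0,1]$, giving the converse. Items (iix) and (x) then follow as you wrote, and your remark that the auxiliary functions $\log f$ and $1/f$ need only be real-valued (not positive) for the $\mathrm{HA}$-step is well taken.

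As for comparison with the paper: there is nothing to compare. The paper does not prove Lemma~\ref{all-con}; it states the facts and sends the reader to \cite{An-Va-Vu,Dr1,Dr3,Ni,No,Zh-Cu-Zh}. Your write-up therefore supplies what the paper omits, and does so in a uniform way rather than item by item. The perspective-function observation you mention as an alternative route for the auxiliary fact (that $t\,g(t)$ is the perspective of $s\mapsto g(1/s)$) is a nice touch and ties the lemma back to Theorem~A, though the direct re-weighting is already self-contained.
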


We remarked that each class of \rm{MN}-convex functions we mentioned in Lemma \ref{all-con} actually contains many examples. So we give several examples  before we continue.
\begin{example}
The functions $t\mapsto 1/\sqrt{t}$ and $t\mapsto -t^{-3}$  are \rm{AH}-convex on $(0,\infty)$. \\
The functions $t\mapsto \exp t$ and $t\mapsto t^r$ $(r<0)$ are \rm{AG}-convex on $\mathbb{R}$ and $(0,\infty)$, respectively.\\
  The function $t\mapsto \log (1+t)$ is \rm{GA}-convex on $(0,\infty)$. Moreover, recall that the  well-known digamma function is defined by $\psi(t)=\frac{d}{dt}\log \Gamma(t)=\frac{\Gamma(t)}{\Gamma'(t)}$ on $(0,\infty)$, where $\Gamma$ denotes the gamma function, i.e.,
$\Gamma(t)=\int_{0}^{\infty}x^{t-1}e^{-x}dx$. It is known that \cite{Zh-Cu-Zh} the functions $t\mapsto \psi(t)+\frac{1}{2t}$ and $t\mapsto \psi(t)+\frac{1}{2t}+\frac{1}{12t^2}$ are \rm{GA}-concave  and \rm{GA}-convex, respectively,  on $(0,\infty)$.\\
It has been shown in \cite{Ni} that if $f(t)=\sum_{n=0}^{\infty}c_n t^n$ is a real analytic function whose radius of convergence is  $r>0$ and whose coefficients $c_n$ are non-negative, then
 $f$ is a \rm{GG}-convex function on $(0,r)$. This implies that the functions $\exp$, $\sinh$ and $\cosh$ are \rm{GG}-convex on $\mathbb{R}$ and the functions $\sec$, $\csc$ and $\tan$ are \rm{GG}-convex on $(0,\pi/2)$. In addition, the functions $t\mapsto(1-t)^{-1}$ and $t\mapsto\frac{1+t}{1-t}$ are \rm{GG}-convex on $(0,1)$, see \cite{Drr}. \\
The functions $t\mapsto\frac{1}{\sqrt{\ln t}}$ and $t\mapsto -(\ln t)^{-3}$ are \rm{GH}-convex on $(0,\infty)$. \\
For all $r\geq 0$ and $r\leq -1$, the function  $t\mapsto\exp(t^r)$   are \rm{HG}-convex on $(0,\infty)$. \\
The functions $t\mapsto\frac{t}{\ln t}$ and $t\mapsto t^r$ $(0\leq r\leq 1)$ are \rm{HH}-convex on $(0,\infty)$.
\end{example}
Regarding the Jensen inequality, Lemma \ref{all-con} can be used to demonstrate variants of the Jensen inequality for every  \rm{MN}-convex function. The proof of next lemma   easily follows from Lemma   \ref{all-con} and the classical Jensen inequality for convex functions. So we do not include details.
\begin{lemma}\label{j-all}
  Let $f$ be a non-negative real function on $(0,\infty)$ and for $i=1,\ldots,n$  let $x_i\geq0$ and $\alpha_i\in[0,1]$ with $\sum_{i=1}^{n}\alpha_i=1$. \\
 \rm{(i)}\ If $f$ is \rm{AG}-convex, then
\begin{align}\label{jag}
f\left(\sum_{i=1}^{n}\alpha_i x_i\right)\leq \prod_{i=1}^{n} f(x_i)^{\alpha_i}.
\end{align}
\rm{(ii)}\  If $f$ is \rm{AH}-convex, then
\begin{align}\label{jah}
f\left(\sum_{i=1}^{n}\alpha_ix_i\right)\leq\left(\sum_{i=1}^{n}\frac{\alpha_i}{f(x_i)}\right)^{-1}.
\end{align}
\rm{(iii)}\ If  $f$ is \rm{GA}-convex, then
\begin{align}\label{jga}
f\left(\prod_{i=1}^{n} x_i^{\alpha_i}\right)\leq \sum_{i=1}^{n}\alpha_if(x_i).
\end{align}
\rm{(iv)}\ If $f$ is \rm{GG}-convex, then
\begin{align}\label{jgg}
f\left(\prod_{i=1}^{n} x_i^{\alpha_i}\right)\leq\prod_{i=1}^{n} f(x_i)^{\alpha_i}.
\end{align}
\rm{(vi)}\ If $f$ is \rm{GH}-convex, then
\begin{align}\label{jgh}
f\left(\prod_{i=1}^{n} x_i^{\alpha_i}\right)\leq\left(\sum_{i=1}^{n}\frac{\alpha_i}{f(x_i)}\right)^{-1}.
\end{align}
\rm{(vii)}\ If $f$ is \rm{HG}-convex, then
 \begin{align}\label{jhg}
f\left(\left(\sum_{i=1}^{n}\frac{\alpha_i}{x_i}\right)^{-1}\right)\leq\prod_{i=1}^{n} f(x_i)^{\alpha_i}.
\end{align}
\rm{(iix)}   If $f$ is \rm{HH}-convex, then
 \begin{align}\label{jhh}
f\left(\left(\sum_{i=1}^{n}\frac{\alpha_i}{x_i}\right)^{-1}\right)\leq
\left(\sum_{i=1}^{n}\frac{\alpha_i}{f(x_i)}\right)^{-1}.
\end{align}
\end{lemma}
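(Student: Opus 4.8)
The plan is to reduce each of the seven inequalities to the classical finite Jensen inequality (and its concave counterpart) by invoking the appropriate equivalence of Lemma~\ref{all-con}, relying only on two elementary observations: the substitution $t\mapsto\ln t$ converts a weighted geometric mean $\prod x_i^{\alpha_i}$ into the weighted arithmetic mean $\sum\alpha_i\ln x_i$, and $t\mapsto 1/t$ converts a weighted harmonic mean into a weighted arithmetic mean. Throughout one may assume $x_i>0$ for all $i$ (and, in the items whose left-hand argument is a geometric or harmonic mean, that also $\alpha_i>0$); the remaining degenerate configurations follow from the limit conventions of the introduction together with continuity of $f$, and I would not dwell on them.

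Five of the items are then immediate. For (i), Lemma~\ref{all-con}(i) gives convexity of $\ln f$, so Jensen yields $\ln f\big(\sum\alpha_ix_i\big)\le\sum\alpha_i\ln f(x_i)$, and exponentiating produces \eqref{jag}. For (iii), Lemma~\ref{all-con}(iii) makes $g:=f\circ\exp$ convex; applying Jensen to $g$ at $\ln x_1,\dots,\ln x_n$ and using $\exp\big(\sum\alpha_i\ln x_i\big)=\prod x_i^{\alpha_i}$ gives \eqref{jga}. Item (iv) is identical with $g:=\ln\circ f\circ\exp$, convex by Lemma~\ref{all-con}(v), followed by exponentiation, giving \eqref{jgg}. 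Items (ii) and (vi) are the concave mirror images: Lemma~\ref{all-con}(ii) says $1/f$ is concave, so concave Jensen gives $1/f\big(\sum\alpha_ix_i\big)\ge\sum\alpha_i/f(x_i)$, and reciprocating the two positive sides gives \eqref{jah}; for (vi) one first uses Lemma~\ref{all-con}(vii) to see that $f\circ\exp$ is AH-convex, hence $1/(f\circ\exp)$ is concave, and then repeats the previous argument at the points $\ln x_i$ to obtain \eqref{jgh}.

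The one point that does require an idea — and which I expect to be the main obstacle — is that the two items with a harmonic-mean \emph{target}, (vii) and (iix), do not collapse under a change of variable; here one must change the \emph{weights}. For (vii), Lemma~\ref{all-con}(iix) says $h(t):=t\ln f(t)$ is convex. Setting $m:=\big(\sum\alpha_i/x_i\big)^{-1}$ and $\beta_i:=\alpha_i m/x_i$, a one-line check gives $\sum\beta_i=1$ and $\sum\beta_ix_i=m$; multiplying the target inequality $\ln f(m)\le\sum\alpha_i\ln f(x_i)$ through by $m$ rewrites it exactly as $h(m)\le\sum\beta_ih(x_i)$, which is Jensen for the convex $h$ with weights $\beta_i$ since $m=\sum\beta_ix_i$, and exponentiating gives \eqref{jhg}. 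Item (iix) is the same manoeuvre with $h(t):=t/f(t)$, concave by Lemma~\ref{all-con}(x): with the same $m$ and $\beta_i$, multiplying $1/f(m)\ge\sum\alpha_i/f(x_i)$ through by $m$ turns it into $h(m)\ge\sum\beta_ih(x_i)$, which is concave Jensen, and taking reciprocals yields \eqref{jhh}. Apart from this reweighting $\alpha_i\mapsto\alpha_i m/x_i$ in the harmonic-target cases, every step is a mechanical translation, which is why the details could be omitted.
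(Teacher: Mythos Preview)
Your proposal is correct and is precisely the approach the paper indicates: the paper omits all details and simply states that the lemma ``easily follows from Lemma~\ref{all-con} and the classical Jensen inequality for convex functions,'' which is exactly the reduction you carry out case by case. Your reweighting $\alpha_i\mapsto\alpha_i m/x_i$ for the harmonic-input items (vii) and (iix) is a clean way to make Lemma~\ref{all-con}(iix) and (x) do the work; an alternative that avoids it is to note that $f$ is \textrm{HG}- (resp.\ \textrm{HH}-)convex iff $t\mapsto f(1/t)$ is \textrm{AG}- (resp.\ \textrm{AH}-)convex and then apply your already-proved (i) and (ii) at the points $1/x_i$, but either route is consistent with what the paper has in mind.
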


\bigskip

We begin with modifications of the celebrated result of  Csisz\'{a}r, \rm{Theorem A}. A  consequence of Theorem A  is that if $f$ is convex,   the associated  perspective function $g_f$ is  convex in both variables. In the next theorem, we investigate the affect of type of convexity of the generating function $f$ to the convexity of the associated perspective function $g_f$. When there is not fear of ambiguity, we briefly use $g$ for the associated perspective function of $f$. Once more, we note that although we restrict the domain and the range of our function to the positive half-line, depending on situation,   it is  possible to consider this sets   more general subsets of  real functions.

\begin{theorem}\label{main1}
Let  $f:(0,\infty)\to(0,\infty)$ be a real function.\\
{\rm (i)}\  $f$ is   \rm{AH}-convex if and only if     $g$ is \rm{AH}-convex on the first coordinate and convex on the second coordinate. In particular, the inequality
\begin{align}\label{tm}
g\big(\mathrm{\mathbf{A}}_\alpha(a,b),\mathrm{\mathbf{A}}_\alpha(x,y) \big)\leq  \mathrm{\mathbf{H}}_\alpha\left\{\left[\mathrm{\mathbf{A}}_\alpha(g(a,x),g(a,y))\right],\left[
\mathrm{\mathbf{A}}_\alpha(g(b,x),g(b,y))
\right]\right\}
\end{align}
holds  for all $a,b,x,y\geq0$ and every $\alpha\in[0,1]$.\\
{\rm (ii)}   $f$ is   \rm{AG}-convex if and only if     $g$ is \rm{AG}-convex on the first coordinate and convex on the second coordinate. In particular, the inequality
\begin{align}\label{qaa}
g\big(\mathrm{\mathbf{A}}_\alpha(a,b),\mathrm{\mathbf{A}}_\alpha(x,y) \big)\leq
\mathrm{\mathbf{G}}_\alpha\left\{ \mathrm{\mathbf{A}}_\alpha(g(a,x),g(a,y)),\mathrm{\mathbf{A}}_\alpha(g(b,x),g(b,y))\right\}
\end{align}
holds, for all $a,b,x,y\geq0$ and every $\alpha\in[0,1]$.\\
{\rm (iii)}   $f$ is   \rm{GG}-convex if and only if    $g$ is jointly \rm{GG}-convex. In particular, \begin{align}\label{qbb}
g\big(\mathrm{\mathbf{G}}_\alpha(a,b),\mathrm{\mathbf{G}}_\alpha(x,y) \big)\leq
\mathrm{\mathbf{G}}_\alpha\left\{ g(a,x),g(b,y))\right\}
\end{align}
 for all $a,b,x,y\geq0$ and every $\alpha\in[0,1]$.\\
{\rm (iv)}   $f$ is   \rm{HH}-convex if and only if    $g$ is jointly \rm{HH}-convex. In particular, \begin{align}\label{qc}
g\big(\mathrm{\mathbf{H}}_\alpha(a,b),\mathrm{\mathbf{H}}_\alpha(x,y) \big)\leq
\mathrm{\mathbf{H}}_\alpha\left\{ g(a,x),g(b,y))\right\}
\end{align}
 for all $a,b,x,y\geq0$ and every $\alpha\in[0,1]$.\\
{\rm (v)}   $f$ is   \rm{GH}-convex if and only if     $g$ is \rm{GH}-convex in its first variable and \rm{GG}-convex in its second variable. In particular, the inequality
\begin{align}\label{qv}
g\big(\mathrm{\mathbf{G}}_\alpha(a,b),\mathrm{\mathbf{G}}_\alpha(x,y)\big)\leq
\mathrm{\mathbf{H}}_\alpha\left\{\mathrm{\mathbf{G}}_\alpha(g(a,x),g(a,y)),
\mathrm{\mathbf{G}}_\alpha(g(b,x),g(b,y))\right\}
\end{align}
holds   for all $a,b,x,y\geq0$ and every $\alpha\in[0,1]$. Moreover, in this case $g$ is jointly  \rm{GG}-convex, i.e., \eqref{qbb} holds.
\end{theorem}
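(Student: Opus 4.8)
The plan is to prove each item by reducing the two-variable statement about the perspective function $g(x,y)=yf(x/y)$ to the corresponding one-variable statement about $f$, using the defining identity $g(x,y)=yf(x/y)$ together with the characterizations in Lemma~\ref{all-con}. I will treat items (i)--(v) in order; the same template recurs, so once (i) is done the others are variations. For (i), to see that $g$ is convex in the second coordinate (with the first coordinate $a$ fixed), I observe that $y\mapsto yf(a/y)$ is exactly the classical perspective of $f$ in the second slot, and by Theorem~A (or the elementary computation that the perspective of a convex function is convex) this is convex as soon as $f$ is convex; since \rm{AH}-convexity of $f$ implies $f$ is convex (by \eqref{agh}, as $\mathrm{\mathbf{H}}_\alpha\le\mathrm{\mathbf{A}}_\alpha$ forces $f(\mathrm{\mathbf{A}}_\alpha(x,y))\le\mathrm{\mathbf{H}}_\alpha(f(x),f(y))\le\mathrm{\mathbf{A}}_\alpha(f(x),f(y))$), this direction is immediate. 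For the first-coordinate claim, fix $y$ and write $g(x,y)=yf(x/y)$; \rm{AH}-convexity of $x\mapsto g(x,y)$ is equivalent to \rm{AH}-convexity of $f$ because scaling the argument by $1/y$ and multiplying the output by the positive constant $y$ preserves both the arithmetic mean in the argument and the harmonic mean in the value (the harmonic mean is positively homogeneous). The converse of (i) follows by specializing the two-variable statements to $a=b$ or to $x=y$ to recover the one-variable definition of \rm{AH}-convexity of $f$.

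For the ``In particular'' inequality \eqref{tm}, I would combine the two coordinatewise properties: first apply convexity in the second coordinate twice, once at $a$ and once at $b$, to bound $g(a,\mathrm{\mathbf{A}}_\alpha(x,y))\le\mathrm{\mathbf{A}}_\alpha(g(a,x),g(a,y))$ and similarly with $b$; then apply \rm{AH}-convexity in the first coordinate to
$g(\mathrm{\mathbf{A}}_\alpha(a,b),\mathrm{\mathbf{A}}_\alpha(x,y))\le\mathrm{\mathbf{H}}_\alpha\big(g(a,\mathrm{\mathbf{A}}_\alpha(x,y)),g(b,\mathrm{\mathbf{A}}_\alpha(x,y))\big)$,
and finally use monotonicity of $\mathrm{\mathbf{H}}_\alpha$ in each slot to substitute the two upper bounds. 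Item (ii) is identical with $\mathrm{\mathbf{H}}_\alpha$ replaced by $\mathrm{\mathbf{G}}_\alpha$ throughout, using instead that \rm{AG}-convexity of $f$ is equivalent to $\log f$ convex (Lemma~\ref{all-con}(i)) and that $\log$ turns the geometric mean of the values into the arithmetic mean, together with the fact (again via \eqref{agh}) that \rm{AG}-convex implies convex so the second-coordinate perspective is convex.

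Items (iii) and (iv) are cleaner because the same mean appears on both sides, so the perspective identity respects it symmetrically: for (iii), \rm{GG}-convexity of $f$ means $h=\log\circ f\circ\exp$ is convex (Lemma~\ref{all-con}(v)); writing $u=\log x$, $v=\log y$, $g(x,y)=yf(x/y)$ becomes $\log g = v + h(u-v)$ after taking logs, and joint convexity of $(u,v)\mapsto v+h(u-v)$ is equivalent to convexity of $h$ (the map $(u,v)\mapsto u-v$ is linear and $v$ is linear), which under the exponential change of variables is precisely joint \rm{GG}-convexity of $g$; the inequality \eqref{qbb} then comes from applying this joint \rm{GG}-convexity directly to the points $(a,x)$ and $(b,y)$. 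Item (iv) runs through Lemma~\ref{all-con}(x): $f$ is \rm{HH}-convex iff $t\mapsto t/f(t)$ is concave, and one checks that $(x,y)\mapsto$ the harmonic-mean-reciprocal of $g$ inherits concavity from this single-variable concavity because $1/g(x,y)=\big(1/y\big)\cdot\big(1/f(x/y)\big)$ and the substitution $x\mapsto 1/x$, $y\mapsto 1/y$ converts harmonic means to arithmetic means in a way compatible with the homogeneity of $g$. The only genuinely delicate point --- and the step I expect to be the main obstacle --- is item (v): here the two input means are geometric but the output behaviour differs between the two coordinates (\rm{GH} in the first, \rm{GG} in the second), so I cannot simply ``commute'' the two one-variable estimates as in (i)--(ii); I would need to verify carefully, via Lemma~\ref{all-con}(vii) (\,$f$ is \rm{GH}-convex iff $f\circ\exp$ is \rm{AH}-convex\,) that the mixed two-variable inequality \eqref{qv} is consistent, and then separately derive the joint \rm{GG}-convexity either by noting \rm{GH}-convex $\Rightarrow$ \rm{GG}-convex for $f$ (since $\mathrm{\mathbf{H}}_\alpha\le\mathrm{\mathbf{G}}_\alpha$) and invoking part (iii), or by directly chaining \eqref{qv} with the mean inequality \eqref{agh} applied to the outer $\mathrm{\mathbf{H}}_\alpha$.
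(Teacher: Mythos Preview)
Your proposal is correct. For parts (i), (ii) and (v) you follow essentially the same route as the paper: establish the coordinate-wise convexity types by scaling and by the observation that the stronger mean inequality (\,$\mathrm{\mathbf{H}}_\alpha\le\mathrm{\mathbf{G}}_\alpha\le\mathrm{\mathbf{A}}_\alpha$\,) forces ordinary convexity of $f$, then chain the two one-variable bounds via monotonicity of the outer mean; the converse in each case is recovered by evaluating $f(t)=g(t,1)$.

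Where you diverge from the paper is in (iii) and (iv). The paper proves (iii) by the bare computation $\mathrm{\mathbf{G}}_\alpha(a,b)/\mathrm{\mathbf{G}}_\alpha(x,y)=\mathrm{\mathbf{G}}_\alpha(a/x,b/y)$ and then applies \rm{GG}-convexity of $f$ directly; you instead pass to logarithmic coordinates, reducing joint \rm{GG}-convexity of $g$ to ordinary convexity of $(u,v)\mapsto v+h(u-v)$ with $h=\log\circ f\circ\exp$, which is equivalent to convexity of $h$ and hence, by Lemma~\ref{all-con}(v), to \rm{GG}-convexity of $f$. This is a genuinely different and more structural argument; it makes the ``iff'' transparent in one stroke. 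For (iv) the paper again argues computationally, rewriting $\mathrm{\mathbf{H}}_\alpha(a,b)/\mathrm{\mathbf{H}}_\alpha(x,y)$ as $\mathrm{\mathbf{H}}_\beta(a/x,b/y)$ with a reparametrized weight $\beta=\alpha y/(\alpha y+(1-\alpha)x)$; your substitution $u=1/x$, $v=1/y$ is a different and equally valid route, but your sketch stops short of the key identity: setting $\tilde g(u,v)=1/g(1/u,1/v)$ one finds $\tilde g(u,v)=u\,k(v/u)$ with $k(t)=t/f(t)$, i.e.\ $\tilde g$ is the perspective of $k$, so by Lemma~\ref{all-con}(x) and Theorem~A, $f$ is \rm{HH}-convex iff $k$ is concave iff $\tilde g$ is jointly concave iff $g$ is jointly \rm{HH}-convex. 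Spelling this out would make your (iv) as clean as your (iii).
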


Before   proving  Theorem \ref{main1}, we would like to note that if $f$ is \rm{MN}-convex, then $g$ is not necessarily  \rm{MN}-convex in both variables, unless $\mathrm{M}=\mathrm{N}$. For example, if $f$ is  \rm{AH}-convex, then part \rm{(i)} of  Theorem \ref{main1} shows that $g$ is    \rm{AH}-convex on the first coordinate and convex on the second coordinate. However, $g$ is not \rm{AH}-convex in both variables. To see this, consider the \rm{AH}-convex function $f(t)=1/\sqrt{t}$ and put $\alpha=1/2$, $a=1$, $x=2$ and  $y=4$. Then
$$3\sqrt{3}=g(a,\mathrm{\mathbf{A}}_{1/2}(x,y))\nleq \mathrm{\mathbf{H}}_{1/2}(g(a,x),g(a,y))=\frac{16\sqrt{2}}{4+\sqrt{2}}.$$
Note in addition that when  $f$ is  \rm{AH}-convex, $g$ is \rm{AA}-convex in both variables. However, the the reverse direction does not hold, i.e., if $g$ is   \rm{AA}-convex in both variables, then $f$ is not necessarily  \rm{AH}-convex.

\begin{proof}\textit{of Theorem \ref{main1}.}
First assume that $f$ is \rm{AH}-convex. For all $x,y,a\geq0$ and every $\alpha\in[0,1]$ we have
\begin{align*}
 g\big(\alpha a +(1-\alpha) b,x\big)&=xf\Big(\frac{1}{x}\big(\alpha a+(1-\alpha)b\big)\Big)\\
&\leq x\Big[\alpha f\big(\frac{a}{x}\big)^{-1}+(1-\alpha)f\big(\frac{b}{x}\big)^{-1}\Big]^{-1}\notag\\
&=\Big[\alpha x^{-1} f\big(\frac{a}{x}\big)^{-1}+(1-\alpha) x ^{-1}f\big(\frac{b}{x}\big)^{-1}\Big]^{-1}\\
&=\Big[\alpha g(a,x)^{-1}+(1-\alpha) g(b,x)^{-1}\Big]^{-1}.\notag
\end{align*}
This ensures that $g$ is  AH-convex on the first coordinate.   Therefore
\begin{align*}
g\Big(\alpha (a,x)+(1-\alpha) (b,y)\Big)\leq \Big(\alpha g(a,z)^{-1}+(1-\alpha) g(b,z)^{-1}\Big)^{-1},
\end{align*}
where $z=\alpha x +(1-\alpha)y$. This means that
\begin{align}\label{qw}
g\big(\mathrm{\mathbf{A}}_\alpha(a,b),\mathrm{\mathbf{A}}_\alpha(x,y) \big)\leq
 \mathrm{\mathbf{H}}_\alpha \left\{ g(a,\mathrm{\mathbf{A}}_\alpha(x,y)),g(b,\mathrm{\mathbf{A}}_\beta(x,y))\right\}.
\end{align}
On the other hand we can write
\begin{align}\label{qn1}
f\left(\frac{a}{\mathrm{\mathbf{A}}_\alpha(x,y)}\right)=f\left(\frac{\frac{\alpha a}{yx}+\frac{(1-\alpha)a}{yx}}{\frac{\alpha }{y}+\frac{(1-\alpha)}{x}}\right)
=f\left(\beta\left(\frac{a}{x}\right)+(1-\beta)\left(\frac{a}{y}\right)\right),
\end{align}
where $\beta=\frac{\frac{\alpha}{y}}{\frac{\alpha}{y}+\frac{(1-\alpha)}{x}}=\frac{\alpha x}{\mathrm{\mathbf{A}}_\alpha(x,y)}$.  Since $f$ is convex,   \eqref{qn1} implies that
\begin{align*}
f\left(\frac{a}{\mathrm{\mathbf{A}}_\alpha(x,y)}\right)\leq \frac{\alpha x}{\mathrm{\mathbf{A}}_\alpha(x,y)}f\left(\frac{a}{x}\right)+
\frac{(1-\alpha)y}{\mathrm{\mathbf{A}}_\alpha(x,y)}f\left(\frac{a}{y}\right).
\end{align*}
Multiplying both sides by $\mathrm{\mathbf{A}}_\alpha(x,y)$ we get
\begin{align}\label{qw1}
g\big(a,\mathrm{\mathbf{A}}_\alpha(x,y)\big)\leq \mathrm{\mathbf{A}}_\alpha(g(a,x),g(a,y)).
\end{align}
Similarly
\begin{align}\label{qw2}
g\big(b,\mathrm{\mathbf{A}}_\alpha(x,y)\big)\leq \mathrm{\mathbf{A}}_\alpha(g(b,x),g(b,y)).
\end{align}
Since the Harmonic mean is monotone, it follows from  \eqref{qw1} and \eqref{qw2} that
\begin{align*}
g\big(\mathrm{\mathbf{A}}_\alpha(a,b),\mathrm{\mathbf{A}}_\alpha(x,y) \big)&\leq
 \mathrm{\mathbf{H}}_\alpha \left\{ g(a,\mathrm{\mathbf{A}}_\alpha(x,y)),g(b,\mathrm{\mathbf{A}}_\alpha(x,y))\right\}\qquad\mbox{(by \eqref{qw})}\\
 &\leq  \mathrm{\mathbf{H}}_\alpha \left\{ \mathrm{\mathbf{A}}_\alpha(g(a,x),g(a,y)),\mathrm{\mathbf{A}}_\alpha(g(b,x),g(b,y))\right\}
\end{align*}
which is the desired inequality \eqref{tm}. With $x=y$, this gives the AH-convexity of $g$ in the first coordinate and with $a=b$ this implies the  convexity of $g$ in the second  coordinate.\par

Conversely, if $g$ is AH-convexity in the first coordinate, then $f(t)=g(t,1)$ is an AH-convex function, too. This completes the proof of {\rm (i)}. \\
Next suppose that $f$ is \rm{AG}-convex.   For all $x,y,a\geq0$ and every $\alpha\in[0,1]$ we have
\begin{align*}
 g\big(\alpha a +(1-\alpha) b,x\big)&=xf\Big(\frac{1}{x}\big(\alpha a+(1-\alpha)b\big)\Big)\\
&\leq x\Big[f\big(\frac{a}{x}\big)^{\alpha}f\big(\frac{b}{x}\big)^{1-\alpha}\Big]\notag\\
&= \left[xf\big(\frac{a}{x}\big)\right]^{\alpha}\left[xf\big(\frac{b}{x}\big)\right]^{1-\alpha}\\
&= g(a,x)^{\alpha}  g(b,x)^{1-\alpha},\notag
\end{align*}
whence $g$ is \rm{AG}-convex in its first variable. Hence
\begin{align}\label{qa1}
g\big(\mathrm{\mathbf{A}}_\alpha(a,b),\mathrm{\mathbf{A}}_\alpha(x,y) \big)\leq
\mathrm{\mathbf{G}}_\alpha\left\{g(a,\mathrm{\mathbf{A}}_\alpha(x,y)),
g(b,\mathrm{\mathbf{A}}_\alpha(x,y))\right\}.
\end{align}
Furthermore,  taking into account the Arithmetic-Geometric means inequality, we know that $f$ is a convex  function so that   \eqref{qw1} and \eqref{qw2} hold. Regarding the monotonicity of the Geometric mean in its both variables we conclude from \eqref{qw1} and \eqref{qw2}  that
\begin{align}\label{qa2}
 \mathrm{\mathbf{G}}_\alpha\left\{g(a,\mathrm{\mathbf{A}}_\alpha(x,y)),
g(b,\mathrm{\mathbf{A}}_\alpha(x,y))\right\}
\leq
\mathrm{\mathbf{G}}_\alpha\left\{ \mathrm{\mathbf{A}}_\alpha(g(a,x),g(a,y)),\mathrm{\mathbf{A}}_\alpha(g(b,x),g(b,y))\right\}.
\end{align}
Combining \eqref{qa1} and \eqref{qa2} we infer \eqref{qaa}. Putting  $x=y$, the $\mathrm{AG}$-convexity of $g$ in first coordinate follows from \eqref{qaa} and with $a=b$, \eqref{qaa} gives the convexity of $g$ in the second coordinate. Conversely, if $g$ is $\mathrm{AG}$-convex    in its first coordinate, then $f(t)=g(t,1)$ is $\mathrm{AG}$-convex as well.\\
To prove {\rm (iii)}, let $f$ be a $\mathrm{GG}$-convex function. Then
\begin{align*}
  f\left(\frac{\mathrm{\mathbf{G}}_\alpha(a,b)}{\mathrm{\mathbf{G}}_\alpha(x,y)}\right)=
   f\left(a^\alpha b^{1-\alpha}x^{-\alpha}y^{\alpha-1}\right)&=
   f\left(\mathrm{\mathbf{G}}_\alpha\left(\frac{a}{x},\frac{b}{y}\right)\right)\leq  \mathrm{\mathbf{G}}_\alpha\left(f\left(\frac{a}{x}\right),f\left(\frac{b}{y}\right)\right).
\end{align*}
Hence
\begin{align*}
g\big(\mathrm{\mathbf{G}}_\alpha(a,b),\mathrm{\mathbf{G}}_\alpha(x,y) \big)&=
\mathrm{\mathbf{G}}_\alpha(x,y)
f\left(\frac{\mathrm{\mathbf{G}}_\alpha(a,b)}{\mathrm{\mathbf{G}}_\alpha(x,y)}\right)   \\
                        &\leq
   \mathrm{\mathbf{G}}_\alpha(x,y)\mathrm{\mathbf{G}}_\alpha\left(f\left(\frac{a}{x}\right),
f\left(\frac{b}{y}\right)\right)\\
   &=
\mathrm{\mathbf{G}}_\alpha\left\{ g(a,x),g(b,y))\right\}
\end{align*}
 as required. This proves  {\rm (iii)}. \\
Next suppose that $f$ is a   $\mathrm{HH}$-convex function. We write
\begin{align*}
  f\left(\frac{\mathrm{\mathbf{H}}_\alpha(a,b)}{\mathrm{\mathbf{H}}_\alpha(x,y)}\right)&=
   f\left(\left(\frac{\alpha a^{-1}+(1-\alpha)b^{-1}}{\alpha x^{-1}+(1-\alpha)y^{-1}}\right)^{-1}\right)\\
&=  f\left(\left(\frac{\alpha \frac{xy}{a}+ (1-\alpha)\frac{xy}{b}}{\alpha y+(1-\alpha)x}\right)^{-1}\right)\\
&=
 f\left(\left(\beta\ \frac{x}{a}+(1-\beta)\frac{y}{b}\right)^{-1}\right)
\end{align*}
in which we set $\beta=\frac{\alpha y}{\alpha y+(1-\alpha)x}$. Since $f$ is  $\mathrm{HH}$-convex, we obtain
\begin{align*}
  f\left(\frac{\mathrm{\mathbf{H}}_\alpha(a,b)}{\mathrm{\mathbf{H}}_\alpha(x,y)}\right)&=
   f\left(\mathrm{\mathbf{H}}_\beta\left(\frac{a}{x},\frac{b}{y}\right)\right)
\leq\mathrm{\mathbf{H}}_\beta\left(f\left(\frac{a}{x}\right),f\left(\frac{b}{y}\right)\right)
\end{align*}
so that
\begin{align}\label{qc1}
  g\left(\mathrm{\mathbf{H}}_\alpha(a,b),\mathrm{\mathbf{H}}_\alpha(x,y)\right)
 =\mathrm{\mathbf{H}}_\alpha(x,y)
f\left(\frac{\mathrm{\mathbf{H}}_\alpha(a,b)}{\mathrm{\mathbf{H}}_\alpha(x,y)}\right)
 \leq \mathrm{\mathbf{H}}_\alpha(x,y)
\mathrm{\mathbf{H}}_\beta\left(f\left(\frac{a}{x}\right),f\left(\frac{b}{y}\right)\right).
\end{align}
A simple calculation shows that
$$\mathrm{\mathbf{H}}_\alpha(x,y)
\mathrm{\mathbf{H}}_\beta\left(f\left(\frac{a}{x}\right),f\left(\frac{b}{y}\right)\right)=
\mathrm{\mathbf{H}}_\alpha\left\{g(a,x),g(b,y)\right\}. $$
Consequently, \eqref{qc} follows from \eqref{qc1}. Hence $g$ is jointly \rm{HH}-convex. Conversely, if $g$ is jointly \rm{HH}-convex, then $f(t)=g(t,1)$ is \rm{HH}-convex.
 This proves  {\rm (iv)}.\\
Let $f$ be a \rm{GH}-convex function. For all $a,b,x\geq0$ and $\alpha\in[0,1]$ we have
\begin{align}\label{qv12}
g\big(\mathrm{\mathbf{G}}_\alpha(a,b),x \big)=
xf\left(\frac{\mathrm{\mathbf{G}}_\alpha(a,b)}{x}\right)&=
xf\left(\mathrm{\mathbf{G}}_\alpha\left(\frac{a}{x},\frac{b}{x}\right)\right)\nonumber\\
& \leq
x\mathrm{\mathbf{H}}_\alpha\left(f\left(\frac{a}{x}\right),f\left(\frac{b}{x}\right)\right)\nonumber\\
&=\mathrm{\mathbf{H}}_\alpha\left(g(a,x),g(b,x)\right),
\end{align}
whence $g$ is \rm{GH}-convex function in its first coordinate.
Furthermore,  we can write
\begin{align}\label{qv11}
 g\big(a,\mathrm{\mathbf{G}}_\alpha(x,y)\big)&=\mathrm{\mathbf{G}}_\alpha(x,y)
f\left(\frac{a}{\mathrm{\mathbf{G}}_\alpha(x,y)}\right)\nonumber\\
&=\mathrm{\mathbf{G}}_\alpha(x,y)
f\left(\mathrm{\mathbf{G}}_\alpha\left(\frac{a}{x},\frac{a}{y}\right)\right)\nonumber\\
&\leq\mathrm{\mathbf{G}}_\alpha(x,y) \mathrm{\mathbf{H}}_\alpha\left(f\left(\frac{a}{x}\right),f\left(\frac{a}{y}\right)\right)\nonumber\\
&\leq \mathrm{\mathbf{G}}_\alpha(x,y) \mathrm{\mathbf{G}}_\alpha\left(f\left(\frac{a}{x}\right),f\left(\frac{a}{y}\right)\right)=
\mathrm{\mathbf{G}}_\alpha\left(g(a,x),g(a,y)\right),
\end{align}
where the last inequality follows from the Harmonic-Geometric mean inequality. This ensures that $g$ is {\rm GG}-convex  in the second coordinate. Furthermore, combining \eqref{qv12} and \eqref{qv11} and using the monotonicity of the Harmonic mean, we reach \eqref{qv}. In addition, a similar argument as in \eqref{qv12} shows that  $g$ is jointly {\rm GG}-convex. Indeed,
\begin{align*}
g\big(\mathrm{\mathbf{G}}_\alpha(a,b),\mathrm{\mathbf{G}}_\alpha(x,y) \big)&=
\mathrm{\mathbf{G}}_\alpha(x,y)f\left(\frac{\mathrm{\mathbf{G}}_\alpha(a,b)}
{\mathrm{\mathbf{G}}_\alpha(x,y)}\right)\\
&= \mathrm{\mathbf{G}}_\alpha(x,y)
f\left(\mathrm{\mathbf{G}}_\alpha\left(\frac{a}{x},\frac{b}{y}\right)\right)\nonumber\\
& \leq
\mathrm{\mathbf{G}}_\alpha(x,y)
\mathrm{\mathbf{H}}_\alpha\left(f\left(\frac{a}{x}\right),f\left(\frac{b}{y}\right)\right)\nonumber\\
&\leq \mathrm{\mathbf{G}}_\alpha(x,y)
\mathrm{\mathbf{G}}_\alpha\left(f\left(\frac{a}{x}\right),f\left(\frac{b}{y}\right)\right)\nonumber\\
&=\mathrm{\mathbf{G}}_\alpha\left(g(a,x),g(b,y)\right),
\end{align*}
and so $g$ is jointly \rm{GG}-convex as we claimed. The converse follows similarly as previous parts.
\end{proof}

We give some particular corollaries of of Theorem \ref{main1}  for some $f$-divergence functionals.
It is  easy to see that there are positive real numbers $c_1,c_2$ for which the function $f(t)=t\log t$ is \rm{AH}-convex on $(c_1,c_2)$. Theorem  \ref{main1} implies that the \textit{Kullback--Leibler distance} $KL(\mathbf{p},\mathbf{q})=\sum_{j=1}^{n}p_j \log\left(\frac{p_j}{q_j}\right)$ is \rm{AH}-convex on the first coordinate and  convex on the second  coordinate. As another example, the function $f(t)=t^r$ is \rm{AG}-convex on $(0,\infty)$ for all $r<0$. By Theorem \ref{main1}, the generated  divergence functional $ \rho_r(\mathbf{p},\mathbf{q})= \sum_{j=1}^{n}p_j^r q_j^{1-r}$ is \rm{AG}-convex on its first coordinate and  convex on its second  coordinate.
As another example, the function $f(t)=2(\sqrt{t}-1)^2$ is \rm{GG}-convex. Accordingly, the related divergence functional, which is the \textit{Hellinger distance} $H^2(\mathbf{p},\mathbf{q}):=2\sum_{j=1}^{n}\left(\sqrt{p_j}-\sqrt{q_j}\right)^2$ is jointly \rm{GG}-convex.

Some applications of Theorem \ref{main1} will  be  given in the next section.
As we saw in Lemma \ref{j-all},   the \rm{MN}-convexity of $f$ produces variants of the Jensen inequality. Here, we study inequality \eqref{m1} in \rm{Theorem A}, when the core function $f$ enjoys \rm{MN}-convexity.

\begin{theorem} Let $\mathbf{a}=(a_1,\ldots,a_n)$ and $\mathbf{b}=(b_1,\ldots,b_n)$ be two $n$-tuples of positive real numbers and let $f$ be a positive real function on $(0,\infty)$. \\
\rm{(i)}\ If $f$ is an \rm{AH}-convex function, then
\begin{align}\label{eq.1}
g\left(\mathbf{\bar{a}},\mathbf{\bar{b}}\right)\leq \mathbf{\bar{a}}^2 I_{\frac{1}{f}}(\mathbf{a},\mathbf{b})^{-1}
\leq   I_f(\mathbf{a},\mathbf{b}).
\end{align}
\rm{(ii)}\ If $f$ is an \rm{AG}-convex function, then
\begin{align}\label{eq.12}
g\left(\mathbf{\bar{a}},\mathbf{\bar{b}}\right)\leq \mathbf{\bar{b}}\ \exp\left[\frac{1}{\mathbf{\bar{b}}}\ I_{\log f}(\mathbf{a},\mathbf{b})\right]
\leq   I_f(\mathbf{a},\mathbf{b}).
\end{align}
\rm{(iii)}\ If $f$ is  a \rm{HA}-convex function, then
\begin{align}\label{eq.12'}
g\left(\mathbf{\bar{a}},\mathbf{\bar{b}}\right)=\frac{\mathbf{\bar{b}}}{\mathbf{\bar{a}}}
g_{\varphi}\left(\mathbf{\bar{b}},\mathbf{\bar{a}}\right)=
\frac{\mathbf{\bar{b}}}{\mathbf{\bar{a}}}\ g_\phi\left(\mathbf{\bar{a}},\mathbf{\bar{b}}\right)\leq \frac{\mathbf{\bar{b}}}{\mathbf{\bar{a}}} I_\varphi(\mathbf{b},\mathbf{a})=\frac{\mathbf{\bar{b}}}{\mathbf{\bar{a}}}
I_\phi\left(\mathbf{{a}},\mathbf{{b}}\right),
\end{align}
where $\varphi(t)=f(1/t)$ and $\phi(t)=tf(t)$.\\
\rm{(iv)}\ If $f$ is an increasing \rm{GA}-convex function, then
\begin{align}\label{eq.123}
&g\left(\mathbf{\bar{a}},\mathbf{\bar{b}}\right)\leq g_{f o\exp}\left(\mathbf{\bar{a}},\mathbf{\bar{b}}\right)\leq I_{f o\exp}(\mathbf{a},\mathbf{b}).
\end{align}
\end{theorem}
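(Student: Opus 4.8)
The plan is to handle all four parts by a single two-step pattern. In each case the chain ``$g(\mathbf{\bar{a}},\mathbf{\bar{b}})\le(\text{middle term})\le I_f(\mathbf{a},\mathbf{b})$'' is produced by first applying Theorem~A — or, when the relevant auxiliary function is concave, the reversed inequality obtained by applying Theorem~A to the negative of that function — not to $f$ itself but to the companion function that Lemma~\ref{all-con} attaches to the given kind of convexity, which yields the middle term, and then dominating the middle term by $I_f(\mathbf{a},\mathbf{b})$ via an elementary scalar inequality. Throughout I write $\mathbf{\bar{a}}=\sum_j a_j$, $\mathbf{\bar{b}}=\sum_j b_j$ and use repeatedly that $I_h(\mathbf{a},\mathbf{b})=\sum_j g_h(a_j,b_j)$ for the perspective $g_h(x,y)=yh(x/y)$ of an arbitrary function $h$.

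For (i), AH-convexity of $f$ is equivalent to concavity of $1/f$ (Lemma~\ref{all-con}(ii)); applying Theorem~A to the convex function $-(1/f)$ gives the reversed bound $I_{1/f}(\mathbf{a},\mathbf{b})\le g_{1/f}(\mathbf{\bar{a}},\mathbf{\bar{b}})=\mathbf{\bar{b}}/f(\mathbf{\bar{a}}/\mathbf{\bar{b}})$, which on rearrangement is the left inequality $g(\mathbf{\bar{a}},\mathbf{\bar{b}})\le\mathbf{\bar{b}}^{2}I_{1/f}(\mathbf{a},\mathbf{b})^{-1}$ (the square being that of $\sum_j b_j$), while the right inequality is the Cauchy--Schwarz estimate $\big(\sum_j b_j\big)^{2}\le I_f(\mathbf{a},\mathbf{b})\,I_{1/f}(\mathbf{a},\mathbf{b})$ obtained by pairing $\sqrt{b_jf(a_j/b_j)}$ with $\sqrt{b_j/f(a_j/b_j)}$. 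For (ii), AG-convexity of $f$ means $\log f$ is convex (Lemma~\ref{all-con}(i)); Theorem~A for $\log f$ reads $\mathbf{\bar{b}}\log f(\mathbf{\bar{a}}/\mathbf{\bar{b}})\le I_{\log f}(\mathbf{a},\mathbf{b})$, and dividing by $\mathbf{\bar{b}}$ and exponentiating turns this into the left inequality, while the right inequality is the weighted AM--GM inequality with weights $b_j/\mathbf{\bar{b}}$ applied to the numbers $f(a_j/b_j)$, since $\exp\big[\mathbf{\bar{b}}^{-1}I_{\log f}(\mathbf{a},\mathbf{b})\big]=\prod_j f(a_j/b_j)^{b_j/\mathbf{\bar{b}}}\le\sum_j(b_j/\mathbf{\bar{b}})f(a_j/b_j)=\mathbf{\bar{b}}^{-1}I_f(\mathbf{a},\mathbf{b})$.

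For (iii), the key observation is that the substitution $x\mapsto 1/x$, $y\mapsto 1/y$ converts a harmonic mean into the reciprocal of an arithmetic mean, so HA-convexity of $f$ is literally convexity of $\varphi(t)=f(1/t)$; Theorem~A applied to $\varphi$ with the tuples $(\mathbf{b},\mathbf{a})$ then gives $g_\varphi(\mathbf{\bar{b}},\mathbf{\bar{a}})\le I_\varphi(\mathbf{b},\mathbf{a})$, and the equalities in the stated chain come out of the definitions: $g(\mathbf{\bar{a}},\mathbf{\bar{b}})=\tfrac{\mathbf{\bar{b}}}{\mathbf{\bar{a}}}g_\varphi(\mathbf{\bar{b}},\mathbf{\bar{a}})=\tfrac{\mathbf{\bar{b}}}{\mathbf{\bar{a}}}g_\phi(\mathbf{\bar{a}},\mathbf{\bar{b}})$, all three equal to $\mathbf{\bar{a}}f(\mathbf{\bar{a}}/\mathbf{\bar{b}})$, and $I_\varphi(\mathbf{b},\mathbf{a})=\sum_j a_jf(a_j/b_j)=I_\phi(\mathbf{a},\mathbf{b})$ with $\phi(t)=tf(t)$. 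For (iv), GA-convexity of $f$ means $f\circ\exp$ is convex (Lemma~\ref{all-con}(iii)), so Theorem~A for $f\circ\exp$ is exactly the right inequality $g_{f\circ\exp}(\mathbf{\bar{a}},\mathbf{\bar{b}})\le I_{f\circ\exp}(\mathbf{a},\mathbf{b})$, and the left inequality follows at once from $t\le e^{t}$ at $t=\mathbf{\bar{a}}/\mathbf{\bar{b}}$ together with monotonicity of $f$: $g(\mathbf{\bar{a}},\mathbf{\bar{b}})=\mathbf{\bar{b}}f(\mathbf{\bar{a}}/\mathbf{\bar{b}})\le\mathbf{\bar{b}}f(e^{\mathbf{\bar{a}}/\mathbf{\bar{b}}})=g_{f\circ\exp}(\mathbf{\bar{a}},\mathbf{\bar{b}})$.

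The steps I expect to need the most care are the two elementary inequalities in (i) and (ii): each must be oriented so that the middle term lies below $I_f(\mathbf{a},\mathbf{b})$, which is what dictates the precise grouping used in Cauchy--Schwarz and in AM--GM, and one should also verify that the auxiliary perspective functions $g_{1/f}$ and $g_{\log f}$ meet the hypotheses of Theorem~A. The other point to watch is in (iii), namely that HA-convexity of $f$ genuinely yields ordinary convexity of $\varphi$ rather than some weaker inequality, so that Theorem~A can be quoted verbatim; beyond these, the argument reduces to substitution into the definitions of $g_h$ and $I_h$.
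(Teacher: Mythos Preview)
Your proposal is correct and follows essentially the same route as the paper: in each part the left inequality comes from applying Theorem~A (or its reversal for a concave auxiliary) to the companion function supplied by Lemma~\ref{all-con}, and the right inequality from an elementary mean comparison---your use of Cauchy--Schwarz in (i) where the paper invokes convexity of $t\mapsto t^{-1}$ is a cosmetic variation. Two minor slips: in (iii) the common value of the three expressions is $\mathbf{\bar{b}}\,f(\mathbf{\bar{a}}/\mathbf{\bar{b}})$, not $\mathbf{\bar{a}}\,f(\mathbf{\bar{a}}/\mathbf{\bar{b}})$; and you are right that the $\mathbf{\bar{a}}^{2}$ in the statement of (i) should read $\mathbf{\bar{b}}^{2}$, as the paper's own computation confirms.
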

\begin{proof}
Suppose that $\mathbf{a},\mathbf{b}$ are $n$-tuples of positive real numbers. For every $i=1,\ldots,n$, we  set  $\beta_i=\frac{b_i}{\sum_{k=1}^n b_k}$ so that $(\beta_1,\ldots,\beta_n)$ is a probability vector.  First assume that   $f$ is an \rm{AH}-convex function  for which we can write
\begin{align}\label{eq.2}
f\Big(\frac{\sum_{i=1}^na_i}{\sum_{i=1}^nb_i}\Big)&=f\Big(\frac{a_1}{\sum_{k=1}^nb_k}+\cdots+\frac{a_n}
{\sum_{k=1}^nb_k}\Big)\nonumber\\
&=f\Big(\frac{a_1}{b_1}\frac{b_1}{\sum_{k=1}^nb_k}+\cdots+\frac{a_n}{b_n}\frac{b_n}{\sum_{k=1}^nb_k}\Big)
\nonumber\\
&\leq \left(\sum_{i=1}^n\frac{b_i}{\sum_{k=1}^n b_k}\frac{1}{f(\frac{a_i}{b_i})}\right)^{-1}
\end{align}
where we use \eqref{jah} with $\beta_i=\frac{b_i}{\sum_{k=1}^n b_k}$.
Multiplying both sides of \eqref{eq.2}  with  $\sum_{k=1}^nb_k$  we get
\begin{align*}
\big(\sum_{k=1}^nb_k\big)f\Big(\frac{\sum_{i=1}^na_i}{\sum_{i=1}^nb_i}\Big)&\leq \Big(\sum_{k=1}^nb_k\Big)\bigg( \sum_{i=1}^n\frac{b_i}{\sum_{k=1}^n b_k}\frac{1}{f(\frac{a_i}{b_i})}\bigg)^{-1}\\
&=\big(\sum_{k=1}^nb_k\big)^2\bigg( \sum_{i=1}^n{b_i}\frac{1}{f(\frac{a_i}{b_i})}\bigg)^{-1}\\
&=\Big(\sum_{k=1}^nb_k\Big)^2 I_{\frac{1}{f}}(\mathbf{a},\mathbf{b})^{-1},
\end{align*}
 which implies the first inequality in (\ref{eq.1}). To get the second inequality we use the convexity of the function $t\mapsto t^{-1}$.
\begin{align*}
\Big(\sum_{k=1}^nb_k\Big)^2 I_{\frac{1}{f}}(\mathbf{a},\mathbf{b})^{-1}=\Big(\sum_{k=1}^nb_k\Big)\bigg( \sum_{i=1}^n\frac{b_i}{\sum_{k=1}^n b_k}\frac{1}{f(\frac{a_i}{b_i})}\bigg)^{-1}
\leq \sum_{i=1}^{n}b_i f\left(\frac{a_i}{b_i}\right)=I_f(\mathbf{a},\mathbf{b}).
\end{align*}
This completes the proof of \rm{(i)}. Next  assume that $f$ is an \rm{AG}-convex function. Then
\begin{align}\label{p-1}
  f\Big(\frac{\sum_{i=1}^na_i}{\sum_{i=1}^nb_i}\Big)&=
  f\left(\sum_{i=1}^{n}\beta_i \frac{a_i}{b_i}\right)\leq \prod_{i=1}^{n} f\left(\frac{a_i}{b_i}\right)^{\beta_i},
  \end{align}
 in which we use the same convex coefficients $\beta_i$ as in the proof of  \rm{(i)}.  Moreover,
\begin{align}\label{p-2}
\prod_{i=1}^{n} f\left(\frac{a_i}{b_i}\right)^{\beta_i}&=\prod_{i=1}^{n}\exp\left[\frac{b_i}{\sum_{k=1}^n b_k}\log f\left(\frac{a_i}{b_i}\right) \right]\nonumber\\
&=\exp\left[\sum_{i=1}^n\frac{b_i}{\sum_{k=1}^n b_k}\log f\left(\frac{a_i}{b_i}\right) \right]=\exp\left[\frac{1}{\mathbf{\bar{b}}}\ I_{\log f}(\mathbf{a},\mathbf{b})\right].
  \end{align}
The left inequality in \eqref{eq.12} follows from \eqref{p-1} and \eqref{p-2}. In addition, utilising the Arithmetic-Geometric means inequality   we reach
\begin{align}\label{p-3}
\prod_{i=1}^{n} f\left(\frac{a_i}{b_i}\right)^{\beta_i}\leq\sum_{i=1}^n\beta_i f\left(\frac{a_i}{b_i}\right)
=\frac{1}{\mathbf{\bar{b}}} I_f(\mathbf{a},\mathbf{b})
  \end{align}
  and the right inequality in \eqref{eq.12} is derived. This concludes \rm{(ii)}.

Now assume that $f$ is a   \rm{HA}-convex function. It is not hard to see that \cite{Drr} the functions $\varphi(t)=f(1/t)$ and $\phi(t)=tf(t)$ are  convex on proper domains so that \rm{Theorem A} gives
 $g_\varphi\left(\mathbf{\bar{a}},\mathbf{\bar{b}}\right)\leq I_\varphi(\mathbf{a},\mathbf{b})$ and
 $g_\phi\left(\mathbf{\bar{a}},\mathbf{\bar{b}}\right)\leq I_\phi(\mathbf{a},\mathbf{b})$.
We consider the convex coefficients $\alpha_i=\frac{a_i}{\sum_{k=1}^na_k}$ for $i=1,\ldots,n$ in such a way that
{\small\begin{align*}
 f\left(\frac{\sum_{i=1}^na_i}{\sum_{i=1}^nb_i}\right)=
 f\left(\left(\frac{\sum_{i=1}^nb_i}{\sum_{i=1}^na_i}\right)^{-1}\right)
 =f\left(\left(\sum_{i=1}^n\frac{b_i}{a_i}\frac{a_i}{\sum_{k=1}^na_k}\right)^{-1}\right)
 =f\left(\left(\sum_{i=1}^n\alpha_i\frac{b_i}{a_i}\right)^{-1}\right).
\end{align*}}
 By the \rm{HA}-convex of  $f$, this concludes that
\begin{align*}
 f\left(\frac{\sum_{i=1}^na_i}{\sum_{i=1}^nb_i}\right)\leq \sum_{i=1}^n\alpha_if\left(\frac{a_i}{b_i}\right)=
 \frac{1}{\sum_{k=1}^na_k}\sum_{i=1}^n a_i f\left(\frac{a_i}{b_i}\right).
\end{align*}
Multiplying both sides by $\sum_{k=1}^nb_k$ we reach
\begin{align*}
 g\left(\mathbf{\bar{a}},\mathbf{\bar{b}}\right)\leq  \frac{\mathbf{\bar{b}}}{\mathbf{\bar{a}}} I_\varphi(\mathbf{b},\mathbf{a}).
\end{align*}
On the other hand,
$$g\left(\mathbf{\bar{a}},\mathbf{\bar{b}}\right)=\mathbf{\bar{b}}
f\left(\frac{\mathbf{\bar{a}}}{\mathbf{\bar{b}}}\right)=
\frac{\mathbf{\bar{b}}}{\mathbf{\bar{a}}} \mathbf{\bar{a}}
\varphi\left(\frac{\mathbf{\bar{b}}}{\mathbf{\bar{a}}}\right)=\frac{\mathbf{\bar{b}}}{\mathbf{\bar{a}}}
g_\varphi\left(\mathbf{\bar{b}},\mathbf{\bar{a}}\right)=
\frac{\mathbf{\bar{b}}}{\mathbf{\bar{a}}}\ g_\phi\left(\mathbf{\bar{a}},\mathbf{\bar{b}}\right).
$$
Furthermore, we compute
\begin{align*}
I_\varphi\left(\mathbf{{b}},\mathbf{{a}}\right)=\sum_{i=1}^{n}a_i\varphi\left(\frac{b_i}{a_i}\right)=
\sum_{i=1}^{n}b_i\frac{a_i}{b_i}f\left(\frac{a_i}{b_i}\right)=
I_\phi\left(\mathbf{{a}},\mathbf{{b}}\right),
\end{align*}
so that we arrive at \rm{(iii)}.

  For proving \rm{(iv)}, first not that a function $f$ is \rm{GA}-convex if and only if the function $t\mapsto f(e^t)$ is convex, indeed, when proper domains are considered. So the Csisz\'{a}r inequality in \rm{Theorem A} implies the right  inequality of \eqref{eq.123}:
  \begin{align}\label{eq.112}
g_{f o\exp}\left(\mathbf{\bar{a}},\mathbf{\bar{b}}\right)\leq I_{f o\exp}(\mathbf{a},\mathbf{b}).
\end{align}
 When $f$ is increasing, we have $f o\exp\geq f$ on the positive half line. This ensure that the left inequality in \rm{(iv)} is valid.
\end{proof}

\section{Matrix Jensen Inequality}
Let $\mathbb{M}_n$ denote the  algebra of $n\times n$  complex matrices and $I$ denote the identity matrix. It is known that (see for example \cite[Theorem 1.2]{FMPS})  an extension of the classical Jensen inequality holds as follows:
\begin{align}\label{o-j}
  f(\langle A\eta,\eta\rangle)\leq \langle f(A)\eta,\eta\rangle
\end{align}
for every continuous convex function $f:J\to\mathbb{R}$ and every Hermitian matrix $A\in\mathbb{M}_n$ with eigenvalues in $J$ and every unit vector $\eta\in\mathbb{C}^n$.     Our mean by $f(A)$ is the Hermitian matrix defined using the spectral decomposition of $A$. Indeed, if $A=\sum_{i=1}^n{\lambda_i}{P_i}$ is the spectral decomposition of the Hermitian matrix $A\in\mathbb{M}_n$, when $\lambda_i$'s are eigenvalues of $A$ and $P_i$'s are projections with $\sum_{i=1}^{n}P_i=I$, then $f(A)=\sum_{i=1}^nf(\lambda_i)P_i$, See \cite{FMPS}.

Lemma \ref{j-all} can be applied to derive variants of \eqref{o-j} for \rm{MN}-convex functions. See \cite{AD,HNM,Ki2,K-d,MK,RHM} and references therein for a collection of such inequalities.
\begin{proposition}\label{j-o-all}
  Let $f$ be a continuous real function. \\
{\rm (i)}  If $f$ is \rm{AH}-convex, then
\begin{align}\label{o-j-ah}
  f(\langle A\eta,\eta\rangle)\leq \langle f(A)^{-1}\eta,\eta\rangle^{-1},
\end{align}
{\rm (ii)}  If $f$ is \rm{AG}-convex, then
\begin{align}\label{o-ag}
  f(\langle A\eta,\eta\rangle)\leq \exp \langle \log f(A)\eta,\eta\rangle,
\end{align}
{\rm (iii)}  If $f$ is \rm{GA}-convex, then
\begin{align}\label{o-ga}
  f(\exp\langle \log A\eta,\eta\rangle)\leq   \langle f(A)\eta,\eta\rangle,
\end{align}
{\rm (iv)}  If $f$ is \rm{GG}-convex, then
\begin{align}\label{o-gg}
  f(\exp\langle \log A\eta,\eta\rangle)\leq   \exp\langle \log f(A)\eta,\eta\rangle,
\end{align}
{\rm (v)}  If $f$ is \rm{GH}-convex, then
\begin{align}\label{o-gh}
   f(\exp\langle \log A\eta,\eta\rangle)\leq    \langle f(A)^{-1}\eta,\eta\rangle^{-1},
\end{align}
{\rm (vi)}  If $f$ is \rm{HG}-convex, then
\begin{align}\label{o-hg}
  f\left(\left\langle  A^{-1}\eta,\eta\right\rangle^{-1}\right)\leq   \exp\langle \log f(A)\eta,\eta\rangle,
\end{align}
{\rm (vii)}  If $f$ is \rm{HH}-convex, then
\begin{align}\label{o-hh}
  f\left(\left\langle  A^{-1}\eta,\eta\right\rangle^{-1}\right)\leq  \langle   f(A)^{-1}\eta,\eta\rangle^{-1},
\end{align}
for every unit vector $\eta\in\mathbb{C}^n$ and every  Hermitian matrix $A\in\mathbb{M}_n$, whose eigenvalues are contained in the domain of $f$.
\end{proposition}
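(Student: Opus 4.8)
The plan is to reduce every case to the operator Jensen inequality \eqref{o-j}, or to its concave counterpart, applied to a single \emph{ordinary} convex (respectively concave) function obtained from $f$ by pre- and/or post-composing with the order-preserving or order-reversing bijections $\exp$ and $t\mapsto t^{-1}$ of $(0,\infty)$; the precise statement of which composition turns a given MN-convexity into ordinary convexity is exactly the content of Lemma \ref{all-con}. Two preliminary remarks will be used tacitly. First, since the domain of an MN-convex $f$ is contained in $(0,\infty)$, any Hermitian $A$ admissible in a given assertion is positive definite, so $\log A$ and $A^{-1}$ are well-defined Hermitian matrices, whose eigenvalues lie respectively in the domain of $f\circ\exp$ and of $t\mapsto f(1/t)$; and $f$ being positive, $f(A)$ is positive definite, so $\log f(A)$ and $f(A)^{-1}$ make sense. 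Second, the spectral calculus is compatible with composition, so that $(\log\circ f)(A)=\log\bigl(f(A)\bigr)$, $(f\circ\exp)(\log A)=f(A)$ and $f\bigl((A^{-1})^{-1}\bigr)=f(A)$.

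I would first dispose of (i), (ii), (iii). For (ii), Lemma \ref{all-con}(i) says $h:=\log\circ f$ is convex, so \eqref{o-j} gives $\log f(\langle A\eta,\eta\rangle)\le\langle\log f(A)\eta,\eta\rangle$, and applying the increasing function $\exp$ yields \eqref{o-ag}. For (i), Lemma \ref{all-con}(ii) says $1/f$ is concave, so the concave form of \eqref{o-j} gives $f(\langle A\eta,\eta\rangle)^{-1}\ge\langle f(A)^{-1}\eta,\eta\rangle$, and taking reciprocals of this inequality between positive quantities reverses it to \eqref{o-j-ah}. For (iii), Lemma \ref{all-con}(iii) says $h:=f\circ\exp$ is convex, so \eqref{o-j} applied to $h$ and the Hermitian matrix $B:=\log A$ gives $f\bigl(\exp\langle\log A\eta,\eta\rangle\bigr)=h(\langle B\eta,\eta\rangle)\le\langle h(B)\eta,\eta\rangle=\langle f(A)\eta,\eta\rangle$, which is \eqref{o-ga}.

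For the remaining four cases the idea is conjugation: replacing $f(t)$ by $f(e^t)$ converts a $\mathrm{G}$-type input into an $\mathrm{A}$-type input and, on the matrix side, replaces $A$ by $\log A$; replacing $f(t)$ by $f(1/t)$ converts an $\mathrm{H}$-type input into an $\mathrm{A}$-type input and replaces $A$ by $A^{-1}$. Concretely: for (v), $f\circ\exp$ is AH-convex by Lemma \ref{all-con}(vii), so part (i) applied to $f\circ\exp$ and $B=\log A$ gives \eqref{o-gh}; for (iv), $f\circ\exp$ is AG-convex (by Lemma \ref{all-con}(v) together with Lemma \ref{all-con}(i), or by a direct computation from \eqref{PQ}), so part (ii) applied to $f\circ\exp$ and $B=\log A$ gives \eqref{o-gg}; for (vi), a one-line computation from \eqref{PQ} shows $\varphi(t):=f(1/t)$ is AG-convex, so part (ii) applied to $\varphi$ and $B=A^{-1}$, together with $\varphi(A^{-1})=f(A)$, gives \eqref{o-hg}; and for (vii), the same computation shows $\varphi(t)=f(1/t)$ is AH-convex, so part (i) applied to $\varphi$ and $B=A^{-1}$ gives $f\bigl(\langle A^{-1}\eta,\eta\rangle^{-1}\bigr)\le\langle f(A)^{-1}\eta,\eta\rangle^{-1}$, i.e.\ \eqref{o-hh}.

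I do not expect a serious obstacle here: all the genuinely analytic input --- that each conjugated function is truly convex or concave --- is already packaged in Lemma \ref{all-con}, and everything else is organisation. The points that require care are the bookkeeping of the composition identities for the spectral calculus, the two order-reversals that occur in every case whose output mean is $\mathrm{H}$ (one from passing to $1/f$, the other from inverting the final inequality, which cancel to restore ``$\le$''), and the verification that after each substitution the transformed matrix still has its spectrum in the domain of the transformed function --- which is automatic precisely because the domain of $f$ is taken inside $(0,\infty)$.
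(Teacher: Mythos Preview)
Your proof is correct, but it follows a different route from the paper's. The paper argues directly from the spectral decomposition: writing $A=\sum_i\lambda_iP_i$ and setting $\alpha_i=\langle P_i\eta,\eta\rangle$, one has $\langle A\eta,\eta\rangle=\sum_i\alpha_i\lambda_i$, $\exp\langle\log A\eta,\eta\rangle=\prod_i\lambda_i^{\alpha_i}$, and $\langle A^{-1}\eta,\eta\rangle^{-1}=\bigl(\sum_i\alpha_i\lambda_i^{-1}\bigr)^{-1}$, with analogous identities for $f(A)$ on the right. Each assertion of the proposition then becomes verbatim one of the scalar Jensen-type inequalities already listed in Lemma~\ref{j-all}. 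You instead keep the matrix Jensen inequality \eqref{o-j} as the single analytic input and manufacture an ordinary convex (or concave) function from $f$ via Lemma~\ref{all-con}, together with the substitutions $B=\log A$ or $B=A^{-1}$ and the functional-calculus identities $(f\circ\exp)(\log A)=f(A)$, $\varphi(A^{-1})=f(A)$. Both arguments are short and essentially equivalent in depth; the paper's version makes the link to Lemma~\ref{j-all} transparent and avoids invoking the concave form of \eqref{o-j} separately, while your version is more modular---once (i)--(iii) are established, (iv)--(vii) are obtained by a uniform conjugation principle without reopening the spectral decomposition.
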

\begin{proof}
We only note that utilising    the    spectral decomposition of   $A$, the inner product terms in every part of proposition can be described by an scalar mean.   For example, if $A=\sum_{i=1}^n{\lambda_i}{P_i}$ is the spectral decomposition of   $A$, then $\sum_{i=1}^{n}  \langle P_i\eta,\eta\rangle=1$ and
\begin{align*}
\exp\langle \log A\eta,\eta\rangle&=\exp\left\langle \left(\sum_{i=1}^{n}\log\lambda_i P_i\right)\eta,\eta\right\rangle\\
&=\exp\left(\sum_{i=1}^{n}\langle P_i\eta,\eta\rangle\log\lambda_i\right)=
\prod_{i=1}^{n}{\lambda_i}^{\langle P_i\eta,\eta\rangle}=\rm{G}(\alpha;\Lambda),
\end{align*}
where $\alpha=(\langle P_1\eta,\eta\rangle,\ldots,\langle P_n\eta,\eta\rangle)$ is a weight vector and $\Lambda=(\lambda_1,\ldots,\lambda_n)$.
\end{proof}
Let $A\in\mathbb{M}_n$ and $B\in\mathbb{M}_m$ be Hermitian matrices with spectral decompositions $A=\sum_{i=1}^n{\lambda_i}{P_i}$ and  $B=\sum_{i=1}^m{\mu_i}{Q_i}$.
When $f$ is a   two variable real function defined on     $J_1\times J_2\subseteq\mathbb{R}^2$, then we can define a Hermitian  matrix $f(A,B)$ as  $$f(A,B)=\sum_{i=1}^{n}\sum_{j=1}^{m}f(\lambda_i,\mu_j)P_i\otimes Q_j$$
and so $f$ becomes a matrix function of two variables from $\mathbb{M}_n\times\mathbb{M}_m$ to $\mathbb{M}_{nm}$. It has been shown in \cite{Ma-Au} that if $f$ is a separately convex function on     $J_1\times J_2\subseteq\mathbb{R}^2$, then
 \begin{align}\label{2-vari}
  f(\langle A\eta,\eta\rangle,\langle B\zeta,\zeta\rangle)\leq
\langle f(A,B)\eta\otimes\zeta,\eta\otimes\zeta\rangle
\end{align}
for all unit vectors  $\eta\in\mathbb{C}^n$ and $\zeta\in\mathbb{C}^m$ and all  Hermitian matrices $A\in\mathbb{M}_n$ and $B\in\mathbb{M}_m$.

As it was shown in Theorem \ref{main1},   the type of convexity of the core function $f$ affects on the convexity of  perspective function $g$.  In the rest of this section, we are going to establish matrix Jensen inequality \eqref{2-vari} for the perspective functions   in the case where $f$ is a \rm{MN}-convex function.

\begin{theorem}\label{main3}
Let  $h$ be a real two-variable function on     $J_1\times J_2\subseteq\mathbb{R}^2$.\\
{\rm (i)}  If $h$ is separately \rm{HH}-convex, then
\begin{align}\label{2-v-hh}
  h\left(\langle A^{-1}\eta,\eta\rangle^{-1},\langle B^{-1}\zeta,\zeta\rangle^{-1}\right)\leq
\left\langle h(A,B)^{-1}\eta\otimes\zeta,\eta\otimes\zeta\right\rangle^{-1};
\end{align}
{\rm (ii)}  If $h$ is separately \rm{GG}-convex, then
\begin{align}\label{2-v-gg}
  h\left(\exp\langle \log A\eta,\eta\rangle,\exp\langle \log B\zeta,\zeta\rangle\right)\leq
\exp\left\langle \log h(A,B)\eta\otimes\zeta,\eta\otimes\zeta\right\rangle;
\end{align}
for all unit vectors  $\eta\in\mathbb{C}^n$ and $\zeta\in\mathbb{C}^m$ and all  Hermitian matrices $A\in\mathbb{M}_n$ and $B\in\mathbb{M}_m$.
\end{theorem}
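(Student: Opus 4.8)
The plan is to reduce both statements to the scalar Jensen-type inequalities of Lemma \ref{j-all} by passing to the spectral decompositions of $A$ and $B$ and introducing the appropriate weight vector. Write $A=\sum_{i=1}^n\lambda_iP_i$ and $B=\sum_{j=1}^m\mu_jQ_j$, and set $w_{ij}=\langle P_i\eta,\eta\rangle\langle Q_j\zeta,\zeta\rangle=\langle (P_i\otimes Q_j)(\eta\otimes\zeta),\eta\otimes\zeta\rangle$. Since $\sum_iP_i=I_n$, $\sum_jQ_j=I_m$ and $\eta,\zeta$ are unit vectors, the family $(w_{ij})_{i,j}$ is a probability vector indexed by the pairs $(i,j)$. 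Moreover $h(A,B)=\sum_{i,j}h(\lambda_i,\mu_j)\,P_i\otimes Q_j$, so that $h(A,B)^{-1}=\sum_{i,j}h(\lambda_i,\mu_j)^{-1}P_i\otimes Q_j$ and $\log h(A,B)=\sum_{i,j}\log h(\lambda_i,\mu_j)\,P_i\otimes Q_j$, whence the right-hand sides of \eqref{2-v-hh} and \eqref{2-v-gg} become $\big(\sum_{i,j}w_{ij}\,h(\lambda_i,\mu_j)^{-1}\big)^{-1}$ and $\prod_{i,j}h(\lambda_i,\mu_j)^{w_{ij}}$, respectively. Likewise the arguments on the left-hand sides reduce to scalar means of the $\lambda_i$ and $\mu_j$: e.g. $\langle A^{-1}\eta,\eta\rangle^{-1}=\big(\sum_i\langle P_i\eta,\eta\rangle\lambda_i^{-1}\big)^{-1}$ is the weighted harmonic mean of $\Lambda=(\lambda_1,\dots,\lambda_n)$ with weights $\alpha=(\langle P_i\eta,\eta\rangle)_i$, and $\exp\langle\log A\eta,\eta\rangle=\prod_i\lambda_i^{\langle P_i\eta,\eta\rangle}$ is the corresponding weighted geometric mean, exactly as computed in the proof of Proposition \ref{j-o-all}.

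For part {\rm (i)}, the separate \rm{HH}-convexity of $h$ means that $x\mapsto h(x,y)$ and $y\mapsto h(x,y)$ are each \rm{HH}-convex. The strategy is the standard two-step ``separate-variable'' argument used for \eqref{2-vari}: first apply the scalar \rm{HH}-Jensen inequality \eqref{jhh} in the first variable with weights $\alpha_i=\langle P_i\eta,\eta\rangle$, obtaining
\begin{align*}
h\!\left(\langle A^{-1}\eta,\eta\rangle^{-1},\mu_j\right)\leq\left(\sum_{i=1}^n\frac{\langle P_i\eta,\eta\rangle}{h(\lambda_i,\mu_j)}\right)^{-1}\qquad\text{for each }j,
\end{align*}
and then apply \eqref{jhh} again in the second variable with weights $\langle Q_j\zeta,\zeta\rangle$ to the left-hand quantities, using the monotonicity of the harmonic mean to combine. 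Reassembling the double sum with weights $w_{ij}$ yields \eqref{2-v-hh}. Part {\rm (ii)} is entirely analogous, replacing \eqref{jhh} by the \rm{GG}-Jensen inequality \eqref{jgg}, the harmonic means by geometric means, and invoking the monotonicity of the weighted geometric mean in its arguments to pass the inequality through the second-variable application.

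I expect the only genuine subtlety to be bookkeeping: one must verify that composing the two one-variable applications of Lemma \ref{j-all} produces precisely the double-indexed weight $w_{ij}$ (this is where the multiplicativity $w_{ij}=\alpha_i\beta_j$ and the fact that $\sum_iP_i=I$, $\sum_jQ_j=I$ are used) and that the intermediate inequalities are preserved under the outer mean, which requires the monotonicity of $\mathrm{\mathbf{H}}_\alpha$ and $\mathrm{\mathbf{G}}_\alpha$ in each argument — a fact already used in the proof of Theorem \ref{main1}. There is also a minor domain issue, namely that after the first application the intermediate value $\langle A^{-1}\eta,\eta\rangle^{-1}$ (resp. $\exp\langle\log A\eta,\eta\rangle$) lies in $J_1$ because it is a mean of the eigenvalues $\lambda_i\in J_1$, so that the second application of the scalar inequality is legitimate; this should be noted but needs no real work. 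No positivity of $h(A,B)$ beyond what is implicit in forming $h(A,B)^{-1}$ and $\log h(A,B)$ is required, and as in the hypotheses of Lemma \ref{j-all} one tacitly assumes $h$ takes positive values on $J_1\times J_2$.
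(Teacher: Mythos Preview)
Your proposal is correct and follows essentially the same route as the paper: reduce to the spectral decompositions, apply the one-variable Jensen inequalities \eqref{jhh} and \eqref{jgg} of Lemma \ref{j-all} separately in each coordinate, and combine via the monotonicity of the harmonic (resp.\ geometric) mean to obtain the double-indexed sum with weights $w_{ij}=\langle P_i\eta,\eta\rangle\langle Q_j\zeta,\zeta\rangle$. The only cosmetic difference is the order in which the two coordinates are handled (the paper first fixes $b=\langle B^{-1}\zeta,\zeta\rangle^{-1}$ and bounds in the first variable, then bounds each $h(\lambda_i,b)$ in the second), which is immaterial.
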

\begin{proof}
  Suppose that $A=\sum_{i=1}^n{\lambda_i}{P_i}$ and  $B=\sum_{i=1}^m{\mu_i}{Q_i}$ are spectral decompositions of Hermitian matrices $A$ and $B$. Assume that $\eta\in\mathbb{C}^n$ and $\zeta\in\mathbb{C}^m$ are unit vectors so that $\sum_{i=1}^{n}\langle P_i\eta,\eta\rangle=1=\sum_{j=1}^{m}\langle Q_j\zeta,\zeta\rangle$. Then
{\small\begin{align}\label{hh2-1}
 h\left(\langle A^{-1}\eta,\eta\rangle^{-1},\langle B^{-1}\zeta,\zeta\rangle^{-1}\right)
&= h\left(\left(\sum_{i=1}^n{\lambda_i}^{-1}\langle P_i\eta,\eta\rangle\right)^{-1},b\right)\nonumber\\
&\leq \left(\sum_{i=1}^n\langle{P_i}\eta,\eta\rangle h(\lambda_i,b)^{-1}\right)^{-1},
\end{align}}
where $b=\langle B^{-1}\zeta,\zeta\rangle^{-1}$ and the inequality follows from the \rm{HH}-convexity of $h$ in the first variable and \eqref{jhh} of Lemma \ref{j-all}. Furthermore, for every $i=1,\ldots,n$, the \rm{HH}-convexity of $h$ in the second variable gives
{\small\begin{align}\label{hh2-2}
h(\lambda_i,b)= h\left(\lambda_i,\left(\sum_{j=1}^m{\mu_j}^{-1}\langle{Q_j}\zeta,\zeta\rangle\right)^{-1}\right)
&\leq \left(\sum_{j=1}^m h(\lambda_i,\mu_j)^{-1}\langle{Q_j}\zeta,\zeta\rangle\right)^{-1}.
\end{align}}
It follows from \eqref{hh2-1} and \eqref{hh2-2} that
{\small\begin{align*}
 h\left(\langle A^{-1}\eta,\eta\rangle^{-1},\langle B^{-1}\zeta,\zeta\rangle^{-1}\right)
&\leq  \left(\sum_{i=1}^n\sum_{j=1}^m\langle{P_i}\eta,\eta\rangle\langle{Q_j}\zeta,\zeta\rangle
h(\lambda_i,\mu_j)^{-1}\right)^{-1}\\
&=\left(\sum_{i=1}^n\sum_{j=1}^m  h(\lambda_i,\mu_j)^{-1} \left\langle (P_i\otimes Q_j)\eta\otimes\zeta,\eta\otimes\zeta\right\rangle\right)^{-1}\\
&=\left\langle h(A,B)^{-1}\eta\otimes\zeta,\eta\otimes\zeta\right\rangle^{-1}
\end{align*}}
and we obtain \eqref{2-v-hh}. Next suppose that $h$ is separately \rm{GG}-convex and suppose that $A$ and $B$ are Hermitian matrices with the same spectral decompositions as in the first part.  Utilising Lemma \ref{j-all} we have
{\small\begin{align}\label{gg2-1}
  h\left(\exp\langle \log A\eta,\eta\rangle,\exp\langle \log B \zeta,\zeta\rangle\right)
= h\left(\prod_{i=1}^{n}{\lambda_i}^{\langle P_i\eta,\eta\rangle},b\right)
\leq \prod_{i=1}^{n}h({\lambda_i},b)^{\langle P_i\eta,\eta\rangle},
\end{align}}
in which $b=\exp\langle \log B \zeta,\zeta\rangle=\prod_{j=1}^{m}{\mu_j}^{\langle Q_j \zeta,\zeta\rangle}$.  Moreover,  another  use of Lemma \ref{j-all} regarding the \rm{GG}-convexity of $h$  in the second variable gives
{\small\begin{align}\label{gg2-2}
 h({\lambda_i},b)=h\left(\lambda_i,\prod_{j=1}^{m}{\mu_j}^{\langle Q_j \zeta,\zeta\rangle}\right) \leq \prod_{j=1}^{m}h(\lambda_i,\mu_j)^{\langle Q_j \zeta,\zeta\rangle}
\end{align}}
for every $i=1,\ldots,n$. From \eqref{gg2-1} and \eqref{gg2-2} we obtain
{\small\begin{align*}
  h\left(\exp\langle \log A\eta,\eta\rangle,\exp\langle \log B \zeta,\zeta\rangle\right)
&\leq \prod_{i=1}^{n} \prod_{j=1}^{m}h(\lambda_i,\mu_j)^{\langle P_i\eta,\eta\rangle\langle Q_j \zeta,\zeta\rangle}\\
&=\exp\left\langle \log h(A,B)\eta\otimes\zeta,\eta\otimes\zeta\right\rangle
\end{align*}}
and we are done.
\end{proof}
\begin{remark}
Let us give some applications of Theorem  \ref{main1} and  Theorem \ref{main3}  for perspective functions.
We show  in Theorem \ref{main1} that if $f$ is \rm{HH}-convex, then the associated perspective function $g$ is \rm{HH}-convex in its both variables and so   \eqref{2-v-hh} holds by  Theorem \ref{main3}. For example, the function $f(t)=t^r$ is  \rm{HH}-convex for every $r\in[0,1]$ and so $g(t,s)=sf(t/s)=s^{1-r}t^r$ is \rm{HH}-convex in its both variables. Note that in this particular example we have $g(A,B)=A^r\otimes B^{1-r}$.
Now \eqref{2-v-hh}  implies that
\begin{align*}
 \langle A^{-1}\eta,\eta\rangle^{-r}\langle B^{-1}\zeta,\zeta\rangle^{r-1}\leq \left\langle(A^r\otimes B^{1-r})\eta\otimes\zeta,\eta\otimes\zeta\right\rangle.
\end{align*}
Note that because the function  $g(t,s)$ in this example can be decomposed as $g(t,s)=g_1(t)g_2(s)$,  the above inequality follows directly from \eqref{o-j}.

\end{remark}

\begin{remark}
If the type of convexity in first coordinate of  a two variable function $h$  differs from its second coordinate, it is also possible to present Theorem \ref{main3}. For example, assume that  the function $h$,  defined on $J_1\times J_2\subseteq\mathbb{R}^2$, is \rm{AH}-convex in its first coordinate and convex in the second coordinate. Then
\begin{align}\label{l1}
  h\left(\langle A\eta,\eta\rangle,\langle B\zeta,\zeta\rangle\right)\leq \left\langle h(A,\langle B\zeta,\zeta\rangle)^{-1}\eta,\eta\right\rangle^{-1},
\end{align}
 where we use \eqref{o-j-ah} in Proposition \ref{j-o-all} for the \rm{AH}-convex function $h_1(t)=h(t,\langle B\zeta,\zeta\rangle)$.  Note that $h(A,\langle B\zeta,\zeta\rangle)$ is a Hermitian matrix in $\mathbb{M}_n$ defined by
$$h(A,\langle B\zeta,\zeta\rangle)=\sum_{i=1}^{n}h(\lambda_i,\langle B\zeta,\zeta\rangle)P_i,$$
in which we use the spectral decomposition of $A$ as before.
Since $h$ is convex on the second coordinate,  for every $i=1,\ldots, n$ we have
\begin{align*}
h\left(\lambda_i,\langle B\zeta,\zeta\rangle\right)\leq\sum_{j=1}^{m}h(\lambda_i,\mu_j)\langle Q_j\zeta,\zeta\rangle.
\end{align*}
Accordingly,
\begin{align*}
h(A,\langle B\zeta,\zeta\rangle)\leq\sum_{i=1}^{n}
\sum_{j=1}^{m}h(\lambda_i,\mu_j)\langle Q_j\zeta,\zeta\rangle P_i,
\end{align*}
whence
{\small\begin{align}\label{l2}
\langle h(A,\langle B\zeta,\zeta\rangle)\eta,\eta\rangle\leq\sum_{i=1}^{n}
\sum_{j=1}^{m}h(\lambda_i,\mu_j)\langle Q_j\zeta,\zeta\rangle \langle P_i\eta,\eta\rangle=\langle h(A,B)\eta\otimes\zeta,\eta\otimes\zeta\rangle.
\end{align}}
Now we obtain from \eqref{l1} and \eqref{l2} that
\begin{align*}
   h\left(\langle A\eta,\eta\rangle,\langle B\zeta,\zeta\rangle\right)&\leq \left\langle h(A,\langle B\zeta,\zeta\rangle)^{-1}\eta,\eta\right\rangle^{-1}\\
&\leq \langle h(A,\langle B\zeta,\zeta\rangle)\eta,\eta\rangle\\
&\leq\langle h(A,B)\eta\otimes\zeta,\eta\otimes\zeta\rangle.
\end{align*}
Similarly, it can be shown that
\begin{align*}
   h\left(\langle A\eta,\eta\rangle,\langle B\zeta,\zeta\rangle\right)\leq \left\langle h(\langle A\eta,\eta\rangle,B)\zeta,\zeta\right\rangle\leq\langle h(A,B)\eta\otimes\zeta,\eta\otimes\zeta\rangle.
\end{align*}
\end{remark}

We show in Theorem \ref{main1} that if $f$ is \rm{AH}-convex, then its perspective function $g$ is
\rm{AH}-convex in first coordinate and convex in the second coordinate. Hence, the two last series of inequalities holds true for the   perspective function of every \rm{AH}-convex.


\end{document}